\newtheorem{theorem}{Theorem}[section]
\def\checkmark{\tikz\fill[scale=0.4](0,.35) -- (.25,0) -- (1,.7) -- (.25,.15) -- cycle;}
\definecolor{mygreen}{rgb}{0,0.2,0}
\definecolor{mygray}{rgb}{0.5,0.5,0.5}
\definecolor{mymauve}{rgb}{0.58,0,0.82}
\definecolor{mypurple}{rgb}{0.38,0,0.32}
\definecolor{myblue}{rgb}{0.1,0,0.32}
\newcommand{\costyle}{\footnotesize\ttfamily\bfseries}
\newcommand{\kwstyle}{\costyle\textcolor{myblue}}
\newcommand{\lt}{\left}
\newcommand{\rt}{\right}
\newcommand{\B}{\mathbf}
\newcommand{\CD}{\small\ttfamily\bfseries}
\tiny\color{mygray}, 
\begin{document}

\abovedisplayskip=4pt
\belowdisplayskip=4pt
\abovedisplayshortskip=0pt
\belowdisplayshortskip=2pt

\setlength{\pdfpageheight}{\paperheight}
\setlength{\pdfpagewidth}{\paperwidth}
\title{Sparse Tensor Algebra as a Parallel Programming Model}

\author{Edgar Solomonik and Torsten Hoefler \\
{\small solomonik@inf.ethz.ch \ \ \ \  \ \ \ \  \ \ \ \ \ \ htor@inf.ethz.ch \ \ \ }}
\affil{Department of Computer Science, ETH Zurich}

\maketitle

\begin{abstract}

Dense and sparse tensors allow the representation of most bulk data structures in computational science applications. We show that sparse tensor algebra can also be used to express many of the transformations on these datasets, especially those which are parallelizable. Tensor computations are a natural generalization of matrix and graph computations. We extend the usual basic operations of tensor summation and contraction to arbitrary functions, and further operations such as reductions and mapping. The expression of these transformations in a high-level sparse linear algebra domain specific language allows our framework to understand their properties at runtime to select the preferred communication-avoiding algorithm. To demonstrate the efficacy of our approach, we show how key graph algorithms as well as common numerical kernels can be succinctly expressed using our interface and provide performance results of a general library implementation.

\end{abstract}

\section{Introduction}
\label{sec:intro}

%
%
%
%
%
%
%

Vectors, which can be represented as arrays, are the most basic data structures in computer science.
Their formation and manipulation is most often done with loops, which can be parallelized so long as the iterations are independent.
Tensors, which can be directly represented as multidimensional arrays but also linearized into one-dimensional arrays, enable the programmer to express computations in their natural dimensionality.
While many programming libraries, including the basic C++ Standard Library provide primitives for array operations, the BLAS~\cite{lawson1979basic} and LAPACK~\cite{LAPACK} libraries and their concurrents provide a much richer set of operations on numerical vectors and matrices.
Many of these primitives have also been extended to sparse multidimensional arrays, most commonly, sparse numerical matrices and sparse graphs, enabling efficient abstractions for computations with irregular (non-dense) structure.
We seek to unite and generalize these array, graph, and matrix primitives by formulating them as algebras over (sparse) tensors.

By providing tensor abstractions equipped with the ability to transform as well as interact with other tensor objects, we seek to enable computational kernels to be wholly expressed in a high-level language.
The value of this paradigm lies not only in succinct and clear code, but also in the ability to abstract the data layout away from the programmer and enable automated parallelization over massively-parallel systems.
Our implementation of these abstractions takes the form of a C++ library rather than a compiler or programming language to enable easy integration and interoperability.
We start from an existing code, Cyclops Tensor Framework (CTF), which
has been shown to provide efficient distributed-memory dense tensor
abstractions for electronic structure
calculations~\cite{solomonik2014massively}, and extend its capabilities
in an attempt to service a much broader set of applications\footnote{The complete implementation of CTF and the benchmark codes presented in this paper are available at \url{https://github.com/solomonik/ctf}.}.
In particular, we enable the tensor objects in CTF to be sparse and their elements to be members of any type and algebraic structure.
Further, we generalize the notion of tensor summations and contractions using arbitrary elementwise functions, permitting the interaction of tensors with elements of two or three different types.

The combination of these two extensions is powerful, but still does not include some common primitive operations, such as graph traversals or matrix factorizations, although these could in principle be constructed using a sequence of CTF tensor operations.
Rather, the computations expressible using the natural language of tensor summations and contractions we provide have to be completely bulk synchronous in the sense that they are a combination of maps, independent operations, and reductions over sparse domains (our reductions are defined by the algebraic structure, unlike those of MapReduce~\cite{dean2008mapreduce}).
This limitation of expressibility is in fact also an advantage of our approach, as it becomes simple to write low-depth (efficiently parallelizable) programs and difficult to write polynomial-depth programs (which depth-first search and direct matrix factorizations are).
Further, our interface forgoes the need to write explicit loops, avoiding the associated correctness dangers of loop-carried dependencies and overhead of bounds checks.

In this limited and very succinct language, we are still able to express a broad range of complex algorithms in different application domains.
We limit ourselves to a few representative examples that leverage sparse tensors: iterative solvers for differential equations, shortest path computations, and electronic structure calculations (the classical consumer of higher-order tensor operations).
Despite using a Bulk-Synchronous Parallel (BSP) execution model~\cite{valiant1990bridging} and working with linearizable data-structures, the multidimensionality of tensors also permits us to express recursive algorithms via a correspondence of tensor dimensions to recursive levels.

We provide an implementation and performance results for the most critical types of sparse tensor operations, namely the summation of sparse tensors, and the contraction of sparse and dense tensors, but not contraction of a pair of sparse tensors or contraction into an output with a predefined sparsity (which we leave as future work).
Our implementation approach leverages and extends the existing CTF infrastructure for mapping, decomposing, redistributing, and contracting dense tensors, but now with sparse layouts and with account of the number of nonzeros.
That is, we do not attempt to infer the nonzero structure of the tensors and perform graph or hypergraph partitioning to achieve an optimal decomposition, but rather leverage the cyclic CTF layout to randomize the nonzero structure.
The result yields algorithms that are optimal for random nonzero structure and, with high probability, for any other nonzero structures.

Our analysis and performance evaluation show that the resulting algorithms reduce both the computation and, in many cases, the communication costs in proportion to the fraction of nonzeros in the tensors.
We show that for a simple sparse matrix benchmark and a more involved use of sparse tensors in an electronic structure method, sparsity improves the time to solution and helps weak scalability significantly.
We further analyze the performance of the computation of all-pairs
shortest-paths of a dense graph via path-doubling, where the use of sparsity enables asymptotically better computation cost~\cite{tiskin_apsp} and better scalability in practice.
While we leverage MKL sparse matrix routines for the first two benchmarks, the third uses our own reference kernels, as it requires the application of a user-defined elementwise function on integer types.

\section{Previous Work}
\label{sec:prev}


Dense matrix abstractions are one of the most well-studied and used primitives in numerical computations.
Sequential frameworks such as the BLAS~\cite{lawson1979basic}, as well as parallel libraries such as ScaLAPACK~\cite{SCALAPACK}, and Elemental~\cite{elemental1} have demonstrated the ability of matrix abstractions to service costly components of a wide range of applications with high-performance kernels.
The hierarchy of the BLAS also demonstrates that higher level abstractions provide more and more performance, e.g., BLAS 3 with respect to BLAS 2.

A number of research efforts have attempted to further raise the level of abstraction and provide primitives for tensors operations.
Frameworks such as Libtensor~\cite{JCC:JCC23377} have shown that performance close to machine peak may be achieved generally for tensor contractions on shared memory machines.
Tensor Contraction Engine (TCE)~\cite{doi:10.1021/jp034596z,1386652} and Global Arrays~\cite{springerlink_ga1} provided an approach to automatic factorization of tensor expressions and generation of code to execute tensor contractions in distributed memory.
Alternatively, a number of frameworks have provided a library-based approach to distributed memory tensor contractions ~\cite{rajbhandari2013framework,Lai:2013:FLB:2503210.2503290,solomonik2014massively}.
These frameworks have all primarily targeted the domain of electronic structure calculations, in some cases providing support for symmetry and block-sparsity typical to these computations.
Further research has exploited the low-rank structure presented in electronic structure calculations, and a recent effort has provided a low-rank block-tensor abstraction for distributed memory machines~\cite{2015arXiv150900309C}.
General sparse tensor frameworks are also not new and have been implemented and applied to sequential electronic structure calculations~\cite{Kats_sp_tensor2013,doi:10.1080/00268971003662896}.

Outside of the electronic structure domain, studies have primarily focused on sparse matrices and graphs.
The Combinatorial BLAS (CBLAS)~\cite{bulucc2011combinatorial}, Parallel BGL~\cite{BGL2005}, and Pregel~\cite{malewicz2010pregel} have all provided key primitives for sparse graph computations.
A number of these graph computations are isomorphic to sparse matrix computations, differing from sparse matrix multiplication only in the elementwise operations.
The kernel of sparse-matrix vector multiplication is so ubiquitous in numerical computations and well-studied from a performance perspective, that we forgo any survey of the literature on this topic (see~\cite{Buluc:2009:PSM:1583991.1584053} and references therein).
However, the problem of multiplication of a sparse matrix by a dense matrix, a critical kernel in numerical computations, has been studied noticeably less.
One classical approach to this problem is the outer-product algorithm~\cite{Kruskal1989135}.
Parallel algorithms for multiplying two sparse matrices have also been recently studied in terms of their communication cost~\cite{doi:10.1137/110848244,Ballard:2013:COP:2486159.2486196}.
In this paper, we move towards studying algorithms in a more general setting of sparse tensors, focusing on the problem most relevant to applications, the contraction of sparse tensors with dense tensors.
Computationally, the key part of this operation can always be reduced to the multiplication of a sparse matrix and a dense matrix, a problem we analyze in detail in Section~\ref{sec:algs}.

\section{Tensor Algebra Interface}
\label{sec:stalg}
%

Sparse tensors have a correspondence to hypergraph dataset representations, where each hyperedge contains $d$ vertices if the tensor is of $d$th order.
By providing a general interface for basic algebraic operations on tensors, we aim to allow succinct expression of bulk operations on multidimensional arrays and hypergraphs (graphs for $d=2$).
We allow tensor elements to be defined on arbitrary algebraic structures, which express the basic properties (associativity, distributivity, zeros) of the desired elementwise tensor operations.
Each tensor is not restricted to elementwise operations defined by its algebraic structure, auxiliary user-defined functions may also be defined and applied. 
Such an approach is not entirely new, the Julia programming language~\cite{2012arXiv1209.5145B} leverages general types and elementwise functions for matrix computations.
The algebraic properties of the elementwise functions define the space of algorithms and optimizations that are permitted when executing these operations over entire tensors.
Finally, we employ an index interface that allows the definition of loop nests acting on one or two tensor operands and producing a single tensor output.
The tensor indices may be thought of as loop indices, with the total number of unique indices in an expression defining a loop nest of that order, with the elementwise operations applied in the innermost loop.
Indices that are omitted in the output are summed over, in line with the Einstein summation notation used in chemistry and physics.

\subsection{Algebraic Structures}
\label{subsec:algstrct}

Algebraic structures allow us to define the types of tensor elements and specify their properties.
Tensor summation and contraction are derived on top of the operations specified in these algebraic structures.
Our interface provides five types of structures, whose properties are summarized in Table~\ref{tab:algstr}.
\begin{table}[htbp]
\centering
\begin{tabular}{c | c | c | c | c | c}
algebraic structure & add. op. & add. id. & add. inv. & mul. op. & mul. id. \\
\hline
{{set}} & & & & & \\
\hline
{{monoid}} & \checkmark & \checkmark & & & \\
\hline
{{group}} & \checkmark  & \checkmark & \checkmark & & \\
\hline
{{semiring}} & \checkmark & \checkmark & & \checkmark & \checkmark \\
\hline
{{ring}} & \checkmark & \checkmark & \checkmark & \checkmark & \checkmark 
\end{tabular} 
\caption{Summary of algebraic structure properties: operators (add. op. and mul. op.), inverses (add. inv.), and identities (add. id., mul. id.).}
\label{tab:algstr}
\end{table}
As it is also possible to define auxiliary functions, algebraic structures that do not have identity elements, namely semigroups (monoid without an additive identity) and rngs (ring without a multiplicative identity), do not need to be explicitly supported.
The action of the semigroup operator may be fully expressed by a function applied to pairs of elements of a set and the action of the multiplicative operator in an rng may be expressed as a function on pairs of elements of a group.
We assume associativity of the additive operators and distributivity of the multiplicative operators defined for these algebraic structures, but allow the user to specify these properties for auxiliary functions. 
The user can also define algebraic structures that satisfy these properties within only a subdomain of elements or only approximately (e.g., floating point arithmetic).
Currently, no static or runtime checks of these properties are done, so it is up to the user to ensure the computation is specified correctly.

The most basic structure, a set, simply holds the C++ type of the tensor elements.
Our implementation is restricted to statically-sized types, which makes iteration and communication of elements straight-forward.
More advanced algebraic structures additionally take elementwise functions as parameters for operators and specific elements as zeros.
The elementwise functions may be provided as function pointers or C++11 Lambdas, and the natural default operators/zeros are provided for basic types.
For example, the default tensors are created on a ring algebraic structure equivalent to the following {\kwstyle Ring} {\CD r},
\begin{lstlisting}
     Ring<> r(0.0,                                    // additive identity
              [](double a, double b){ return a+b; },  // additive operator
              MPI_SUM,                                // MPI Op for additive operator
              1.0,                                    // multiplicative identity
              [](double a, double b){ return a*b; }); // multiplicative operator
\end{lstlisting}
As another example, we define an algebraic structure used for shortest path computations on graphs with integer weights, the tropical semiring, as follows,
\begin{lstlisting}
    Semiring<int> ts(INT_MAX/2,                             // additive identity
                     [](int a, int b){ return min(a,b); },  // additive operator
                     MPI_MIN,                               // MPI Op for additive operator
                     0,                                     // multiplicative identity
                     [](int a, int b){ return a+b; });      // multiplicative operator
\end{lstlisting}
In the tropical semiring, the addition operator is an integer minimum, while the multiplication operator is integer addition.
The identity element of the multiplication operator is provided as $0$, while the identity element of the addition operator is provided as half of the largest representable integer.
Using half of the integer max as the additive identity prevents integer overflow when adding (multiplying on the tropical semiring) two additive identities, avoiding the need for an overflow check in the actual operator for the tropical semiring computations presented in this paper. 

\subsection{Tensors}

Tensor objects may be defined as ordered collections of elements of some algebraic structure.
We denote a tensor as $\mathbf{T}$ and its elements as $T_{ijk..}$.
Each element of a tensor is given by some configuration of a set number of indices.
This number of indices is referred to as the order of the tensor, while the size of the range of the indices are its dimensions.
For instance, a scalar has order zero, a vector has order one, and a matrix has order two.

Each tensor is distributed across a set of processors defined by a {\kwstyle{World}}.
In our implementation each {\kwstyle{World}} corresponds to a set of MPI processes.
For instance, a {\kwstyle{World}} containing all available processes may be defined as
\begin{lstlisting}
    World dw(MPI_COMM_WORLD);
\end{lstlisting}



The simplest and most commonly used tensors are those of low order.
While we provide special interfaces for {\kwstyle{Scalar}}, {\kwstyle{Vector}}, and {\kwstyle{Matrix}}, they are all simply instances of {\kwstyle{Tensor}}.
The constructors for all these objects take parameters in the following order
\begin{enumerate}
\item element type (template parameter)
\item the order and dimensions of the tensor
\item sparsity and symmetry attributes of the tensor
\item the set of processes over which the tensors are distributed
\item the algebraic structure of the tensor elements
\end{enumerate}
The parameters may be omitted when they take on the default, e.g., the element type is `double', no sparsity or symmetry attributes, the tensor is distributed over all processes, or the algebraic structure is the standard addition/multiplication over a basic type.
The first integer parameter for {\kwstyle{Vector}} defines the dimension of the vector.

The symmetry of a matrix may be specified as one of
NS (nonsymmetric), SH (symmetric without diagonal), SY (symmetric with diagonal), and AS (antisymmetric).
These can be combined (binary or) with an attribute for sparsity, SP.
For instance, we can define an undirected graph with $n$ nodes and integer edge weights as a sparse $n\times n$ symmetric matrix with no diagonal on the tropical semiring,
\begin{lstlisting}
    Matrix<int> F(n,n,SP|SH,ts);
\end{lstlisting}


Tensors of arbitrary type and order are defined via the {\kwstyle{Tensor}} class.
In our implementation, this templated (by element type) class derives from an internal non-templated class, which implements effectively all functionality, ensuring that instantiation of tensors of new type takes little compilation time.
For arbitrary-order tensors, the symmetry attribute specification is more complex, and allows for the definition of partially symmetric tensors.
An array of size equal to the order of the tensor of symmetry attributes must be provided, where the $i$th element specifies the symmetry relation of the $i$th tensor dimension with the $(i+1)$th.
For example, we can construct a sparse fourth order tensor often used in quantum chemistry, which has two pairs of antisymmetric indices as
\begin{lstlisting}
    int dims[] = {nv,nv,no,no};
    int syms[] = {AS,NS,AS,NS};
    Tensor<> V(4,dims,SP,syms);
\end{lstlisting}
This tensor would contain double-precision floating-point elements with the standard ring operators, and have the dimensions {\CD nv}-by-{\CD{nv}}-by-{\CD{no}}-by-{\CD{no}} with the first index being antisymmetric with the second (specified by the position of the first AS), and the third being antisymmetric with the fourth (specified by the position of the second AS).

While creation of dense tensors immediately maps and allocated the data (and sets to the additive identity, when the algebraic structure has one), a newly created sparse tensor contains no data.
Data is read and written to sparse and dense tensors bulk synchronously via index-value pairs.
Special functions provide access to data local to the processor and to certain specified distributions.

\subsection{Tensor Operations}

Having defined the algebraic structure and properties of the representation of the tensors, we can now define various algebraic operations.
We focus on defining operations on one or two tensors, as functions of multiple tensors can be deconstructed into pairwise tensor functions (albeit with a potential need for larger intermediate tensors).
We first recall how Einstein summation notation may be used to express tensor summations and contractions over the algebraic structure.
We then study two types of specially-defined operators, `functions' that take one or two elements as operands and output a new element, and `transforms' that modify an element, based zero, one, or two other elements as operands.
These functions are more general than those specifiable in algebraic structures, as all operands/outputs are permitted to be of different type.

\subsubsection{Indexed Tensors}

To relate tensors to one another, we can assign any order $k$ tensor, $k$ characters as `indices' for each of its $k$ ways.
Such indices are commonly used in tensor mathematics, especially in chemistry and physics, and may also be thought of as for loop variables in the context of operations.
When the same index appears in two tensors being operated on, or in a tensor and a result, these indices are matched and specify the semantics for the operation.
When an index appears two or more times in the same tensor, it specifies that the operation should touch only the corresponding diagonal of the tensor.
The use of pure indices for expression of tensor summations and contractions is referred to as Einstein summation notation.
This indexed interface has been previously presented for CTF~\cite{solomonik2014massively}, and similar interfaces have been used in other tensor contraction libraries~\cite{JCC:JCC23377}.
We start by recalling the basics of the index notation, then show how the notation can be combined with more arbitrary elementwise functions.

Our interface creates indexed tensor objects from tensor objects by overloading the bracket operator for tensors, e.g., we can assign the {\CD V} tensor created above indices as follows,
\begin{lstlisting}
    Idx_Tensor V["abij"];
\end{lstlisting}
It is often convenient to inline this notation directly into the tensor expression.
For example, given tensor {\CD W} with the same dimensions and symmetry as {\CD V}, we can add it to {\CD W} via the operation,
\begin{lstlisting}
    V["abij"]+=W["abij"];
\end{lstlisting}
When indices appear only in the operands, an implicit summation over this index is implied, while when an index appears exclusively in the output tensor, it is implied that the result is replicated (mapped) over this index.
For example the following reduction and map operations:
\[q = \sum_{i=1}^n v_i, \quad 
 \forall i\in\{1,\ldots, n\}, b_{i} = \sum_{j=1}^n A_{ij}, \quad
 \forall i \in \{1,\ldots, n\}, z_i = 42, \quad 
 \forall i,j\in\{1,\ldots, n\}, F_{ij} = \sum_{k=1}^n G_{kj},\]
are expressed very similarly in our interface once the tensors are defined appropriately,
\begin{lstlisting}
    q[""] = v["i"];       b["i"] = A["ij"];       z["i"] = 42;       F["ij"] = G["kj"];
\end{lstlisting}
Additionally, the interface supports iteration over diagonals.
Below, we show three examples: defining an identity matrix, scaling the diagonal of a matrix, and computing the sum of the superdiagonal of a third-order tensor,
\begin{lstlisting}
    I["ii"] = 1.0;                 A["jj"] *= 3.0;                 double s = T["iii"];
\end{lstlisting}
The scalar $s$ is automatically cast to the type of an order-zero tensor.

For basic types, it is possible to cast scalars to tensors with no indices, e.g., we can compute the square of the Frobenius norm (2-norm) of {\CD V} via the command,
\begin{lstlisting}
    double nrm_sq = V["abij"]*V["abij"];
\end{lstlisting}
although we also provide a special {\kwstyle norm2()} function for this faculty.
Here all the four indices $a,b,i,j$ are summed over, which is implicitly inferred as these indices do not appear in the output.
If {\CD V} has dimensions $n\times n\times n \times n$, the above line of code computes
\[q = \sum_{a=1}^n\sum_{b=1}^n\sum_{i=1}^n\sum_{j=1}^n V_{abij}^2.\]
This interface supports typical tensor contractions such as (with index and summation limits omitted for brevity),
\[\forall i,\quad z_i = \sum_j W_{ik}\cdot v_k, \quad\quad 
  \forall i,j, \quad C_{ij} = \frac{1}{2}\sum_k A_{ik}\cdot B_{kj}, \quad\quad 
  \forall a,b,i,j, \quad F_{abij} = F_{abij} + \sum_k G_{ikab}\cdot H_{kj},\]
via simple one line commands that closely correspond to the mathematical formulation:
\begin{lstlisting}
    z["i"] = W["ik"]*v["k"];   C["ij"] = .5*A["ik"]*B["kj"];   F["abij"] += G["ikab"]*H["kj"];
\end{lstlisting}
In the above tensor contractions each index appears in exactly two tensors, consistent with the typical definition of contractions.
However, as in summations, within our interface it is also possible to perform tensor operations where indices appear in exclusively one tensor, corresponding, as before, to reductions or maps.
Further, it is possible for indices to appear in all three tensors, which specifies a set of independent problems over this index.
For example, the Hadamard matrix product, defined by
\(\forall i,j, T_{ij} = V_{ij}\cdot W_{ij}\)
is easily expressible via our interface,
\begin{lstlisting}
    T["ij"] = V["ij"]*W["ij"];
\end{lstlisting}
More generally, indexed tensor summations and contractions can be interpreted by imagining a set of nested for loops over every unique index with the tensor expression appearing in the innermost loop as an (accumulation) operation on elements of multidimensional arrays.

\subsubsection{Tensor Functions and Transforms}

We allow the algebraic structure operator to be replaced (or the algebraic structure to be extended), by allowing basic user-defined functions.
This syntax allows the user to succinctly express tensor transformations that modify each element independently, as well as combine pairs of element of two tensors to produce a third.
There are two signatures for user defined functions:
\begin{lstlisting}
    (type_B) <- (type_A),
    (type_C) <- (type_A, type_B).
\end{lstlisting}
When the types are the same, the latter function signature is the same as that of the addition and multiplication operators defined for algebraic structures.
When the tensor expression requires summation of a set of function outputs, the addition operator of the output tensor algebraic structure is used.

As an example, a vector of forces {\CD F} may be formed as a set of partial sums of interactions of particles stored in vector {\CD P}.
To do this, first a {\kwstyle Monoid} algebraic structure is created, which requires defining an MPI reduction operator for the summation of forces.
Then, once the vector {\CD P} is populated with data (local data may be written to tensor bulk-synchronously via the {\kwstyle write} function), the forces are calculated and accumulated in one command.
\begin{lstlisting}
    struct force{
      double x, y; 
      force(double x0, double y0){ x=x0; y=y0; }
    };
    MPI_Op op_add;
    MPI_Op_create([](void * a, void * b, int * n, MPI_Datatype *){
                    for (int i=0; i<*n; i++){
                      ((force*)b)[i].x += ((force*)a)[i].x;
                      ((force*)b)[i].y += ((force*)a)[i].y;
                    }
                  }, 1, &op_add);
    Monoid<force> mf(force(0.,0.), [](force a, force b){ return force(a.x+b.x,a.y+b.y); }, op_add);
    Vector<force> F(n, mf);
    
    struct particle{ double x, y, px, py, A; };
    Vector<particle> P(n, Set<particle>());
    P.write( ... ); // input local particle data and indices **/ 

    force interact(particle a, particle b){
      double dx = a.x-b.x;    double dy = a.x-y.x;    double f21 = a.A*b.A/dx*dx*dy*dy;
      return force(f21*dx, f21*dy);
    }
    Function<particle,particle,force> f(&interact);
    F["i"] += f(P["i"],P["j"]); // in parallel compute for i = 1 to n, F_i = sum_j f(P_i, P_j)
\end{lstlisting}

Transforms provide a more powerful abstraction, that allows accumulation and modification to existing elements, overriding the addition operator of the output tensor algebraic structure.
There are three signatures for transforms:
\begin{lstlisting}
    void (&type_A), 
    void ( type_A, &type_B),
    void ( type_A,  type_B, &type_C).
\end{lstlisting}
In each case, the last value is passed by reference and should be modified within the transform function.
As an example, a set of forces may be integrated to update a particle's location, via the following transform.
\begin{lstlisting}
    Transform<force,particle> t([] (force f, particle & p){ p.px += p.A*f.x; p.py += p.A*f.y; });
    t(F["i"],P["i"]);
\end{lstlisting}
This interface syntax allows arbitrary-depth nested loops to be executed in parallel via one line of code, e.g., the command,
\begin{lstlisting}
    ((Transform<>)([] (double a, double b, double & c){ c=c/(a+b); }))(A["i"],B["j"],C["ij"]);
\end{lstlisting}
amounts to executing the following code over distributed arrays,
\begin{lstlisting}
    for (int i=0; i<n; i++){
      for (int j=0; j<n; j++){
        C[i,j] = C[i,j]/(A[i]+B[j]);
      }
    }
\end{lstlisting}
By default the framework assumes transforms and functions are distributive.
The non-distributive case, which also has important use-cases, is not yet supported by our implementation.

\subsection{Implementation of the Interface}

The realization of arbitrary elementwise tensor functions and transforms does not present an algorithmic challenge, but poses an interesting practical implementation challenge.
In particular, functions that combine/interact/produce tensors of different types imply that the entire framework cannot simply be templated and instantiated for a given type (the approach taken by CTF previously).
Instead, the overwhelming bulk of the logic of the framework is implemented using runtime-specification of the type, i.e., the number of bytes needed for each tensor element suffices for the execution of the mapping and redistribution logic.
A lightweight templated layer is built on top of this, providing a user-friendly typed interface, while keeping compilation time small even when triply-templated user-defined functions are being applied.
This approach is a significant departure from and extension of the standard four numerical types provided by BLAS and LAPACK libraries.

\section{Applications}
\label{sec:apps}

%
%

We consider a few common algorithms from three computational domains that can be formulated succinctly as tensor operations.
We start with algorithms for some general sparse iterative methods, then consider shortest path computations in graphs, and finally an electronic structure method from quantum chemistry.

\subsection{Sparse Iterative Methods}

Sparse iterative methods are widely used to obtain numerical solutions to differential equations from a variety of problem domains.
Sparse-matrix vector multiplication is a ubiquitous primitive for such methods.
This primitive is well-studied and available in numerous existing sequential and parallel numerical libraries (see~\cite{Buluc:2009:PSM:1583991.1584053} and citations therein).
To illustrate a simple example usage of our interface, we present a Jacobi iteration for an arbitrary $n\times n$ sparse matrix of double precision floating pointer numbers, $\B A$, and a right-hand side $\B b$.
Jacobi iteration solves the matrix equation \(\B A\B x=\B b\) using the iterative scheme,
\[\forall i\in [1,n], \quad x^{l+1}_i=(1/A_{ii})\cdot(b_i-\sum_{i=0,i\neq j}^{n} A_{ij}\cdot x^l_j),\]
with guaranteed convergence when $A$ is diagonally dominant.
The below code (where {\CD n} is the dimension of {\CD b} and {\CD dw} is the {\kwstyle{World}} we previously defined) implements Jacobi iteration using our interface.
\begin{lstlisting}
  void Jacobi(Matrix<> & A, Vector<> & b, World & dw, int n){
    Vector<> x(n,dw);
    Vector<> d(n,dw);
    Vector<> r(n,dw);
    Matrix<> R(n,n,SP,dw);
    d["i"] = A["ii"];  // set d to be diagonal of A
    // invert each element of d
    (Transform<> ([] (double & d){ d=1./d; }))(d["i"]);
    R["ij"] = A["ij"]; // set R = A
    R["ii"] = 0;       // set the diagonal of R to zero (R=A-diag(d))
    do { 
      // compute x = diag(d)*(Ax+b)
      x["i"]  = -R["ij"]*x["j"];
      x["i"] +=  b["i"];
      x["i"] *=  d["i"];
  
      // compute residual r = b - Ax
      r["i"]  =  b["i"];
      r["i"] -=  A["ij"]*b["j"];
    } while (r.norm2() > 1.E-6); // repeat until convergence
  }
\end{lstlisting}
The application of sparse tensor contractions to iterative methods is not limited to simple sparse-matrix vector schemes such as Jacobi iteration.
For instance, multigrid interpolation and restriction operators may be formulated as multiplication of sparse matrices~\cite{doi:10.1137/110848244}.
%

\subsection{Graph Algorithms}
\label{sec:apps:ga}

The duality of graphs and sparse matrices is well-known and much previous research has been done on algebraic formulations of graph algorithms.
A number of algebraic formulations of graph algorithms were surveyed in~\cite{doi:10.1137/1.9780898719918}, including fundamentals such as the Bellman--Ford algorithm for single-source shortest-paths~\cite{bellman1958routing,ford1958network}, the Floyd--Warshall algorithm for all-pairs shortest-paths~\cite{Floyd:1962,Warshall:1962}, and Prim's algorithm for construction of minimal spanning trees~\cite{6773228}.
While we focus on shortest-paths algorithms, we note that these can be extended to the use of more interesting algebraic structures such as the geodetic semiring, which allows the computation of betweenness centrality~\cite{batagelj1994semirings,brandes2001faster}, an important measure in community structure analysis.

The shortest distances from one node to all others may be computed using Dijkstra's algorithm when the edge-lengths are positive.
However, Dijkstra's algorithm achieves this by relaxing a different subset of edges every iteration, with dependence on the previous iteration, making parallelizations inefficient.
On the other hand, the Bellman--Ford~\cite{bellman1958routing,ford1958network} algorithm, which relaxes all edges at every iteration, may be expressed simply as a matrix-vector multiplication on the tropical semiring.
For any graph $G=_(V,E)$, the adjacency matrix $\B A$ encodes the edge weights, i.e., $A_{ij}$ is the weight of edge $(j,i)\in E$ or $A_{ij}=\infty$ if $(j,i)\notin E$, and $A_{ii}=0$.
If the source vertex is the zeroth vertex, we can initialize a vector of distances $\B{P^{(0)}}=(0,\infty,\ldots)$, of dimension $|V|$.
Bellman-Ford computes the iterative scheme 
\[P^{(r)}_i=\min_j (A_{ij} + P_j^{(r-1)}),\]
starting from $r=1$ and until convergence ($\B{P^{(r)}}=\B{P^{(r-1)}}$), which will be reached by $r=|V|$, unless the graph contains negative-weight cycles.
The following examples uses CTF to compute Bellman--Ford via matrix-vector multiplication on the tropical semiring, {\CD ts} (defined in Section~\ref{subsec:algstrct}).
\begin{lstlisting}
  // Input: potentially sparse adjacency matrix A and initial distances P
  // Output: shortest distances P=P+A*P+A*A*P+...
  // return false if there are negative cycles, true otherwise
  template <typename t>
  bool Bellman_Ford(Matrix<t> A, Vector<t> P, int n){
    Vector<t> Q(n);
    int r = 0;
    do { 
      if (r == n) return false;      // exit if we did not converge in n iterations
      else r++;
      Q["i"]  = P["i"];              // save old distances
      P["i"] += A["ij"]*P["j"];      // update distances 
    } while (P.norm1() < Q.norm1()); // continue so long as some distance got shorter
    return true;
  }
\end{lstlisting}
The function in the example above computes single-source shortest-paths, provided that {\CD P} and {\CD A} are defined on the tropical semiring {\CD ts} and that {\CD P} is initialized to zero for the source node index, and infinity (additive identity for the tropical semiring ) everywhere else.
We make the function templated, allowing it to be used for integer, floating-point, or custom-type edge weights.
Each iteration of the while loop, relaxes all edges, and stores the updated shortest paths (with up to {\CD r} hops) in {\CD P}.
The vector {\CD Q} is defined in order to perform a convergence check, which allows early termination or reports a negative cycle in the graph if convergence is unachievable.

We further consider the computation of all-pairs shortest-paths (APSP); we compute only distances, but the extension to paths is trivial.
The Floyd--Warshall algorithm~\cite{Floyd:1962,Warshall:1962} is the standard work-efficient algorithm for computing the distance matrix for an arbitrary graph.
However, while work efficient, it is polynomial depth (it is the tropical semiring equivalent of Gaussian elimination).
An alternative logarithmic-depth algorithm for computing APSP is based on path-doubling, which computes the closure of the adjacency matrix over the tropical semiring ({\CD ts}) by repeatedly squaring the matrix:
\begin{lstlisting}
  // Given n-by-n matrix A compute (I+A)^n
  template <typename t>
  void path_doubling(Matrix<t> A, int n){
    for (int l=1; l<n; l=l<<1){
      A["ij"] += A["ik"]*A["kj"];
    }
  } 
\end{lstlisting}
However, pure path doubling is not work efficient, it requires $O(n^3\log(p))$ operations instead of the $O(n^3)$ required by Floyd--Warshall.
A workaround for this was presented by Tiskin~\cite{tiskin_apsp}.
The main idea of Tiskin's path doubling algorithm is that all shortest paths of $2l$ hops (going through $2l$ edges) have either fewer than $l$ hops or contain a shortest path of exactly $l$ hops plus another path of no more than $l$ hops.
Thus it suffices to pick out all paths of exactly $l$ hops and multiply by this sparse matrix.
Tiskin's algorithm adaptively selects the number of hops that the sparse matrix can contain between $l/2$ and $l$, to ensure that this matrix has few nonzeros.
We present a slightly simplified version of the algorithm by picking shortest paths of length $l$ at each step, losing the guarantee of $O(n^3)$ cost, but only for very exotic graphs.
\newpage
\begin{lstlisting}
  // Given n-by-n matrix A compute (I+A)^n over an idempotent semiring
  void Tiskin_path_doubling(Matrix<int> A){
    // struct for path with w=path weight, h=#hops
    struct path {
      int w,h;
      path(int w_, int h_){ w=w_; h=h_; }
      path(){};
    };
    MPI_Op opath;
    ... // define reduction opath as elementwise min
    
    // ts semiring with hops carried by winner of min
    Semiring<path> p(path(INT_MAX/2,0), 
                     [](path a, path b){ 
                       if (a.w<b.w) return a; 
                       else return b; 
                     },
                     opath,
                     path(0,0),
                     [](path a, path b){ 
                       return path(a.w+b.w, a.h+b.h); 
                     });
    Matrix<path> P(n,n,dw,p);     // path matrix to contain distance matrix
    P["ij"] = ((Function<int,path>)([](int w){ return path(w,1); }))(A["ij"]);
    
    Matrix<path> Pl(n,n,SP,dw,p); // sparse path matrix to contain all paths of l hops
    
    for (int l=1; l<n; l=l<<1){
      // let Pl be all paths in P consisting of l hops
      Pl["ij"] = P["ij"];
      Pl.sparsify([=](path p){ return (p.h == l); });
      //            each   shortest path of up to 2l hops either 
      // (1)          is a shortest path of up to  l hops or
      // (2) consists of a shortest path of up to  l hops
      //             and a shortest path of        l hops
      P["ij"] += Pl["ik"]*P["kj"];
    } // P is the distance matrix
    A["ij"] = ((Function<path,int>)([](path p){ return p.w; }))(P["ij"]);
  }
\end{lstlisting}
This path doubling approach reduces the computation and communication bandwidth costs with respect to the dense approach as the number of nonzeros in the sparse matrix {\CD Pl} is expected to decrease geometrically with each shift of {\CD l}. 

\subsection{Electronic Structure Calculations}
\label{sec:apps:esc}

To provide an example with the use of higher-order tensor contractions, we return to the motivating application domain for Cyclops Tensor Framework, electronic structure calculations.
The modeling of electronic correlation accomplished within post Hartree-Fock methods is most often done via direct or iterative numerical schemes consisting of tensor contractions.
Almost universally, these contractions involve the two-electron integral tensor $V$ (often separated by spin-cases to reduce storage and cost).
As two-electron interactions decay cubically with distance, a localized set of orbitals (basis functions) enables accurate numerical solution of electronic correlation models without taking into account all interactions~\cite{HF_screening_1995,LCCSD_2001,doi:10.1080/00268971003662896,ChowHF2015}.

When small interactions are screened (ignored, but often accounted for by error correcting smoothing terms), the resulting two-electron integrals take on a sparse representation.
We study one of the simplest post-Hartree-Fock methods, third-order M{\o}ller-Plesset perturbation theory (MP3)~\cite{moller1934note}.
We employ the tensor contractions used within the Aquarius framework~\cite{solomonik2014massively} to compute the correction to the energy (for a derivation of similar MP3 equations, see~\cite{BarlettMP3_1975}),
\begin{align*}
E_\mathrm{MP3} = \sum_{abij} \frac{(V^{ab}_{ij})^2}{-\epsilon_a-\epsilon_b+\epsilon_i+\epsilon_j}\bigg(1&+V^{ij}_{ab}+\sum_f F^{a}_fT^{fb}_{ij} - \sum_n F^{n}_iT^{ab}_{nj} \\
&+\frac{1}{2} \sum_{e,f}V^{ab}_{ef}T^{ef}_{ij}+\frac 12 \sum_{m,n}V^{mn}_{ij}T^{ab}_{mn}-\sum_{e,m}V^{am}_{ei}T^{eb}_{mj}\bigg),
\end{align*}
given first-order energies $\B \epsilon$, one-electron and two-electron interaction tensors $\B F$ and $\B V$, as well as forming, $\B T$ as
\[ T^{ab}_{ij} = \frac{V^{ab}_{ij}}{-\epsilon_a-\epsilon_b+\epsilon_i+\epsilon_j}.\]
This set of tensor contractions can be reduced to operate on smaller tensors by `spin integration', a technique that accounts for the spin-orientation of electrons and orbitals whose interactions are represented by $\B F$ and $\B V$.
The spin-integrated method can be expressed in terms of the vectors 
{\CD Ea}, {\CD Ei} (blocks of $\B \epsilon$), the
matrices {\CD Fab}, {\CD Fij} (blocks of $\B F$), 
and the sparse fourth order tensors 
{\CD Vabij}, {\CD Vijab}, {\CD Vabcd}, {\CD Vijkl}, {\CD Vaibj} (blocks of $\B V$).
The dimensions of these tensors correspond to occupied orbitals (electrons) or virtual orbitals, of which there are $m$ and $n$, respectively.
When dense, the integral tensors themselves may be computed with cost $O(n^4)$, while the MP3 computation has cost $O(m^2n^4)$.
When all the integral tensors tensors are sparse, these costs drop proportionally to the number of nonzeros.
\begin{lstlisting}
  double MP3(Tensor<> Ei, Tensor<> Ea, Tensor<> Fab, Tensor<> Fij,
             Tensor<> Vabij, Tensor<> Vijab, Tensor<> Vabcd, Tensor<> Vijkl, Tensor<> Vaibj){
    // compute the denominator tensor, D_abij = 1./(-e_a-e_b+e_i+e_j)
    Tensor<> D(4,Vabij.lens,*Vabij.wrld);
    D["abij"] += Ei["i"]; 
    D["abij"] += Ei["j"]; 
    D["abij"] -= Ea["a"]; 
    D["abij"] -= Ea["b"]; 
    ((Transform<>)([](double & b){ b=1./b; }))(D["abij"]);

    // form T_abij = V_abij/(-e_a-e_b+e_i+e_j)
    Tensor<> T(4,Vabij.lens,*Vabij.wrld);
    T["abij"] = Vabij["abij"]*D["abij"];

    // compute Z_abij = V_ijab + <F,T> + <V,T>
    Tensor<> Z(4,Vabij.lens,*Vabij.wrld);
    Z["abij"]  = Vijab["ijab"];
    Z["abij"] += Fab["af"]*T["fbij"];
    Z["abij"] -= Fij["ni"]*T["abnj"];
    Z["abij"] += 0.5*Vabcd["abef"]*T["efij"];
    Z["abij"] += 0.5*Vijkl["mnij"]*T["abmn"];
    Z["abij"] -= Vaibj["amei"]*T["ebmj"];

    // update T_abij = T_abij + D_abij*Z_abij 
    T["abij"] += Z["abij"]*D["abij"];

    // return sum(abij) T_abij*V_abij
    double MP3_energy = T["abij"]*Vabij["abij"];
    return MP3_energy;
  }
\end{lstlisting}
The most expensive (sixth order) contractions in the method are the last three terms contributing to {\CD Z}.
However, in practice, the applications of the denominator, {\CD D} as well as the other lower order contractions must also be executed efficiently.
We will evaluate the sparse tensor contraction algorithms presented in the following section, by benchmarking the MP3 and shortest-path examples.

\section{Algorithms and Analysis}
\label{sec:algs}


Our algorithms for sparse tensor summation and contraction build directly on those implemented in CTF for dense tensors~\cite{solomonik2014massively}.
Our approach is sparsity-oblivious, in the sense that it uses the same decomposition for all sparsity distributions, taking into account only the nonzero count.
In this way, the mapping logic is general and inexpensive, but at the cost of not finding optimal decompositions for certain sparsity structures.
The parallel decomposition is cyclic, as it is for dense tensors, which means clusters (blocks) of nonzeros are broken apart and mapped to different processors.
The cyclic distribution yields a load-balanced data layout for all practical scenarios we are aware of, but also sacrifices locality that could be potentially exploited by a more adaptive layout selection.

\subsection{Mapping Selection and Decomposition}

CTF employs nested SUMMA~\cite{Geijn:SUMMA:1997} and replication (2.5D and 3D algorithms) to execute tensor contractions on many processors~\cite{solomonik2014massively}.
For dense matrices, these algorithms are known to achieve asymptotically optimal communication costs for square matrices~\cite{SD_EUROPAR_2011} as well as for rectangular matrices~\cite{demmel2013communication}.
Dense nonsymmetric tensor contractions are isomorphic to matrix multiplication, which can be seen by mapping three sets of indices from the tensor contraction to the three indices involved in matrix multiplication (this is shown in detail in~\cite{SDH_ETHZ_2015}).
We note that this holds so long as each index appears in exactly two tensors, while CTF also supports indices appearing in one or all three tensors.
We focus on analyzing plain tensor contractions (each index in two tensors).
When an index appears in only one tensor, it can usually be separated into an independent summation operation (e.g., if the index is in one operand, so long as the multiplication operator is distributive), and this is desired for optimal cost.
Within summations, CTF handles such indices by either selecting a
mapping where the index is completely local (not decomposed over), or performing a broadcast/reduction for operands/\-outputs (both approaches have computation and communication cost no greater than the size of the larger tensor).
When an index appears in all three tensors, it defines a set of independent problems acting on disjoint datasets.
It is always desirable to distribute independent problems to different processors (although very small granularity of the problem or the overhead of redistribution can make other parallelizations more favorable).

CTF chooses the preferred tensor contraction algorithm, by tuning over a set of possible decompositions over a set of processor grids.
The processor grids are chosen by factorizing the processor count and forming all unique foldings of the factors into lower order processor grids, or by inferring the physical processor grid topology on machines such as BlueGene/P and BlueGene/Q.
So long as the processor count is factorizable, in particular, it has no large prime factors, this scheme yields a large space of potential decompositions.
For each processor grid, CTF considers possible assignments of the processor grid dimensions to tensor dimensions, as well as replication along the processor grid dimension, and evaluates the cost of the resulting algorithm.
When there are no large prime factors, it is easy to see that this space contains asymptotically optimal algorithms, which require processor grids with up to three dimensions.
Tuning over a larger space, permits lower constant factors on the communication cost, and the preservation of the initial decompositions of the tensors.
The details and analysis of these algorithms, decompositions, and necessary redistributions for dense tensors is studied in more detail in~\cite{solomonik2014massively}.

We employ the same algorithms and decomposition selection for sparse tensors, except only nonzeros are communicated.
The performance models that evaluate the cost of each decomposition, scale the computation, interprocessor communication, and memory bandwidth costs associated with each decomposition by the total fraction of nonzeros of the tensor (and by a tunable constant prefactor).
This scaling is valid so long as the number of nonzeros per processor in the decomposition is load balanced.
We first argue that this condition holds with high probability for a cyclic decomposition of a tensor with uniform sparsity.

Given a tensor with $z$ nonzeros, a contraction maps the tensor to a $n_1\times n_2$ matrix with $z$ nonzeros.
A cyclic decomposition of the tensor (or of the matrix), corresponds to an assignment of $n_1/q_1$ columns and $n_2/q_2$ rows to each of $q=q_1q_2$ processors ($q$ may be smaller than the total number of processors $p$ if the tensor contraction replicates the sparse tensor).
The intersection of these rows and columns, corresponds to $n_1n_2/q$ elements in the tensor, and since the nonzeros are uniformly distributed, by a classical balls-into-bins argument, so long as $z>q\log q$, there are $\Theta(z/q)$ nonzeros owned by each processor.

We also argue that if the rows and columns are randomized, the potential load imbalance cannot be too great.
The randomization can be done cheaply (with cost linear in the number of nonzeros at input/output time and no overhead at contraction time) by assigning the same random permutation to all tensor dimensions of the same length and applying this permutation to the indices of the elements the user reads and writes to the tensor.
The load imbalance is maximized, when the balls are heaviest, i.e., when there are dense rows or columns in the sparse matrix.
Without loss of generality, consider the case when $q_2> q_1$, $n_1>n_2$ and there are $c=\lfloor z/n_1\rfloor$ dense columns, the maximum number of zeros on any of the $q$ processors is with high probability,
\[x=\min\left(n_1n_2/q,z,(n_1/q_1)\cdot B(c,q_2)\right),\]
where $B(c,q_2)$ is the balls-into-bins bound with $c$ balls and $q_2$ bins, so when $c>q_2\log q_2$, $B(c,q_2)=\Theta(c/q_2)$ and $x=\Theta(z/q)$.
Thus, when the sparse tensor is distributed over a large enough number of processors in any cyclic layout, we can expect the layout to be load balanced.

In the current implementation, CTF does not do randomization of columns or rows.
The common application case would be for the sparse matrix to contain dense areas, for which a cyclic layout achieves ideal load balance.
Problems could only arise in peculiar scenarios, when for instance every $k$th column is dense.
We leave it for future work to study safety mechanisms to avoid exceeding memory constraints for such cases.
One approach would be to check the nonzero count of a given mapping after selection but prior to redistribution, and pick an alternative mapping or apply column/row permutations if the load balance is worse than an expected threshold.

\subsection{Algorithmic Cost Analysis}

Assuming, that the number of nonzeros is asymptotically perfectly load balanced, as shown under certain assumptions in the previous subsection, we can now analyze the communication cost of the contraction.
While CTF considers memory bandwidth, latency, and interprocessor communication costs, for brevity we only study the interprocessor communication cost in this paper.
Let $W$ be the greatest number of elements sent and received by any processor with memory $M$.
Every tensor contraction is reducible to a matrix multiplication $\B C=\B A\cdot \B B$ where $\B A$ is $m\times k$, $\B B$ is $k\times n$, and $\B C$ is $m\times n$.
We focus our analysis on the case when one of the tensors, $\B A$, is sparse, namely it contains $z$ entries for some $z\in[1,mk]$.
So long as our load balance assumption holds, the computation cost will be no greater than
\(F=\Theta(nz/p),\)
as in each computation step (step of SUMMA) the set of $\B A$ operands will correspond to those owned by some processor.

Since CTF selects initial tensor decompositions independently of the contractions that will be performed, there is an associated redistribution cost with most contractions.
Layouts that avoid this cost are specially considered and evaluated, but we consider the worst case, in which all tensors are redistributed (a load-balanced initial layout is always selected), which has a communication cost of 
\[W_\text{redist}=\Theta(z/p+kn/p+mn/p),\]
where we ignore the fact that the target layout may be a subset of processors, since in this case, the algorithm will replicate the tensor at the same communication cost as the redistribution.

As mentioned, 1D, 2D, and 3D decomposition are all in the space of CTF decompositions analyzed. The cost of the 1D decomposition corresponds to the full replication of one tensor, 
\[W_\text{1D}=\Theta(\min(z,kn,mn)).\]
In a 2D decomposition the SUMMA algorithm is used, and the three variants which keep one tensor local and communicate the other two~\cite{Geijn:SUMMA:1997} are considered for a cost of
\begin{align*}
W_\text{2D}=\Theta\bigg(\min_{p_1p_2=p}\big[\min(&z/p_1+kn/p_2, z/p_1+mn/p_2, 
kn/p_1+mn/p_2)\big]\bigg).
\end{align*}
In a 3D (2.5D) decomposition, one tensor is replicated over $p_3$ processors and a variant of SUMMA is used over the other two dimensions, which has the same total communication cost as replicating each tensor over one processor grid dimension:
\begin{align*}
W_\text{3D}=\Theta\bigg(\min_{p_1p_2p_3=p}\bigg[\frac{z}{p_1p_2}+\frac{kn}{p_1p_3}+\frac{mn}{p_2p_3}\bigg]\bigg),
\end{align*}
under the constraint that 
\[M\geq \min\bigg[\frac{hz}{p_1p_2},\frac{kn}{p_1p_3},\frac{mn}{p_2p_3}\bigg],\]
where $hz$ is the greatest number of elements of $\B A$ owned by any processor (by assumption/expectation $h$ is a constant).
Since the cheapest of all three possible layouts is chosen, the total cost for a contraction is
\[W\leq W_\text{redist} +\min(W_\text{1D},W_\text{2D},W_\text{3D}).\]
This reduces to simply the equation for the 3D decomposition cost,
\begin{align*}
W=\Theta\bigg(\min_{p_1p_2p_3=p}\bigg[\frac{z}{p_1p_2}+\frac{kn}{p_1p_3}+\frac{mn}{p_2p_3}\bigg]\bigg),
\end{align*}
under the constraint that,
\[M\geq \min\bigg[h\frac{z}{p_1p_2},\frac{kn}{p_1p_3},\frac{mn}{p_2p_3}.\bigg]\]
This cost is natural as one can imagine picking one or two of $p_1,p_2,p_3$ to be one (the redistribution cost makes the advantage of one tensor not being communicated at all in 1D and 2D algorithms asymptotically irrelevant).
When at least one of $p_1,p_2,p_3$ is $1$ (1D or 2D algorithm) the memory constraint is trivial as it is satisfied if all tensors fit into memory.

We now state and prove a lower bound on communication cost to try gauge how close to optimality the upper bound is.
\begin{theorem}
\label{thm:spmm_lb}
Consider any algorithm for matrix multiplication $\B C=\B A\cdot \B B$ where $\B A$ is $m\times k$ with $z$ nonzeros, $\B B$ is $k\times n$ (dense), and $\B C$ is $m\times n$.
Let $z_1=\min(m,\sqrt{z})$ and $z_2=\min(k,z/z_1)$, and define
 $r_1=\min(n,z_1,z_2)$, $r_2=\mathrm{median}(n,z_1,z_2)$, and $r_3=\max(n,z_1,z_2)$.
The algorithm must have a worst-case horizontal communication cost of at least $W=\bar{W}$, where
\[\bar{W}= 
\begin{cases}
\frac{r_1r_2r_3}{p\sqrt{M}} + 
\lt(\frac{r_1r_2r_3}p\rt)^{2/3} & : p>r_2r_3/r_1^2. \\
r_1\lt(\frac{r_2r_3}{p}\rt)^{1/2} & : r_2r_3/r_1^2 \geq p > r_3/r_2. \\ 
r_1r_2& : r_3/r_2 \geq p.
\end{cases}
\]
\end{theorem}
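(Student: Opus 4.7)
The plan is to prove the bound by exhibiting a worst-case instance that embeds a dense rectangular matrix multiplication into the sparse one, and then invoking the standard communication lower bounds for rectangular dense matrix multiplication due to Irony--Toledo and its extensions by Ballard--Demmel--Holtz--Schwartz and Demmel et al.

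First I would argue that the dimensions $z_1,z_2$ defined in the statement are exactly the dimensions of a ``balanced'' dense rectangle that fits inside the sparsity pattern of $\B A$: by construction $z_1\le m$, $z_2\le k$, and $z_1z_2 \ge z$ up to a constant, so we can choose a sparsity pattern in which all $z$ nonzeros of $\B A$ lie inside a $z_1\times z_2$ submatrix and then pad $\B B$ so that only its corresponding $z_2\times n$ block contributes. Any algorithm that correctly computes $\B C=\B A\cdot\B B$ for all inputs must, when run on this instance, effectively perform a dense $z_1\times z_2\times n$ matrix multiplication (the semiring structure together with the stated assumption of no cancellation/identity collapse guarantees that all the corresponding products must be formed, read, and summed). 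Hence any valid worst-case communication lower bound for dense rectangular matrix multiplication with problem shape $(z_1,z_2,n)$ transfers verbatim.

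Next I would sort the three side lengths as $r_1\le r_2\le r_3$ and invoke the rectangular communication lower bound. Applying Loomis--Whitney to the set of multiplicative contributions touched by a single processor between two rounds of communication (as in Irony--Toledo) yields the standard memory-dependent bound $\Omega\bigl(r_1r_2r_3/(p\sqrt M)\bigr)$; combining this with the trivial ``output touches memory'' constraint gives the memory-independent bound $\Omega\bigl((r_1r_2r_3/p)^{2/3}\bigr)$. These two bounds together supply the first regime, valid once $p$ is large enough that the memory-dependent bound dominates, which is precisely the condition $p>r_2r_3/r_1^2$ (equivalently, the ``3D'' regime where all three factors can be subdivided). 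For moderate $p$ the Loomis--Whitney argument collapses along the short axis and the best one can do is a 2D SUMMA-style layout, yielding $\Omega\bigl(r_1\sqrt{r_2r_3/p}\bigr)$; this is the second regime and is active exactly in the band $r_3/r_2\le p\le r_2r_3/r_1^2$. For the smallest values of $p$, each processor must send or receive at least a constant fraction of the smallest matrix $r_1r_2$ simply by the pigeonhole of $r_1r_2r_3$ outputs distributed over at most $p\le r_3/r_2$ processors, giving the third regime $W=\Omega(r_1r_2)$.

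Finally I would check that the case boundaries are consistent: at $p=r_2r_3/r_1^2$ the two bounds $r_1r_2r_3/(p\sqrt M)+(r_1r_2r_3/p)^{2/3}$ and $r_1\sqrt{r_2r_3/p}$ match up (the $\sqrt M$ dependence disappears exactly when the per-processor working set saturates to the Loomis--Whitney cube), and at $p=r_3/r_2$ the 2D bound $r_1\sqrt{r_2r_3/p}$ reduces to $r_1r_2$. Putting the three regimes together gives the piecewise definition of $\bar W$ in the theorem.

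The main obstacle is the first step: one must be careful that the embedding genuinely forces any correct algorithm to ``perform'' the embedded dense multiplication, so that the dense lower bound applies. In a ring this follows from a standard non-cancellation argument (the algorithm cannot know a priori which entries happen to be zero, and any execution strategy must succeed on inputs where the dense block is generic); in a more general semiring, one needs the multiplicative operator to be non-degenerate on the chosen witness inputs. A secondary obstacle is stitching the three rectangular regimes together cleanly, since the standard references state the bounds for fully balanced matrices and extending them to the fully rectangular case (and verifying the crossover points) requires tracking which geometric constraint is binding in each regime, which I would handle by sorting $r_1\le r_2\le r_3$ once at the outset and expressing all bounds in those terms.
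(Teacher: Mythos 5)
Your proposal follows essentially the same route as the paper: embed a dense $z_1\times z_2$ block in the sparsity pattern of $\B A$, observe that computing $\B C=\B A\cdot \B B$ then subsumes the dense rectangular product $\B{\bar C}=\B{\bar A}\cdot\B{\bar B}$ of shape $z_1\times z_2\times n$, and apply the rectangular dense matrix-multiplication communication lower bound of Demmel et al.~\cite{demmel2013communication}, whose three regimes give the piecewise form of $\bar W$. The paper simply cites that bound rather than re-deriving its regimes, so your additional Loomis--Whitney discussion is just an unpacking of the cited result (and note the relevant containment is $z_1z_2\le z$, which holds since $z_2\le z/z_1$, rather than $z_1z_2\ge z$).
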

\begin{proof}
Consider an $A$ whose sparsity is such that the top right $z_1\times z_2$ block $\B{\bar{A}}$ is dense, where $z_1=\min(m,\sqrt{z})$ and $z_2=\min(k,z/z_1)$.
Let $\B{\bar{B}}$ be the first $z_2$ columns of $B$ and $\B{\bar{C}}$ be the first $z_1$ rows of $\B{C}$.
The multiplication $\B C=\B A\cdot \B B$ requires computation of the dense matrix multiplication $\B{\bar{C}}=\B{\bar{A}}\cdot \B{\bar{B}}$, and therefore the worst-case communication cost of the former cannot be lower than the latter.
By the dense matrix multiplication lower bound given in~\cite{demmel2013communication}, we obtain the communication cost lower bound stated in the theorem.
\end{proof}
We can observe that this lower bound is only attained by the provided algorithm when $n\geq p\max(z_1,z_2)$, the 1D layout where the sparse matrix is replicated over all processors.
In other cases, a matrix other than the sparse matrix is being communicated, and the lower bound reduction to dense matrix multiplication, permits the communicated dense matrix to be smaller by a factor of $\min(m/z_1,k/z_2)$.
Therefore, the question of whether a faster algorithm or a stronger lower bound for the problem exists, remains open.
However, based on the algorithmic communication cost upper-bound, we can observe that whenever the algorithm communicates the sparse tensor (which is advantageous whenever it is smaller or proportional in size to $\B B$ or $\B C$), the communication cost goes down proportionally with the number of nonzeros.
Since the computation cost of the contraction is also proportional to the number of nonzeros in $\B A$, we can expect that, at least for certain problems, the parallel scalability of the sparse algorithm will be as good as that of the dense algorithm.

\section{Performance Evaluation}
\label{sec:perf}

\begin{figure*}[t]
\centering
\subfigure[effect of sparsity on strong scaling performance]{
\includegraphics[width=3.1in]{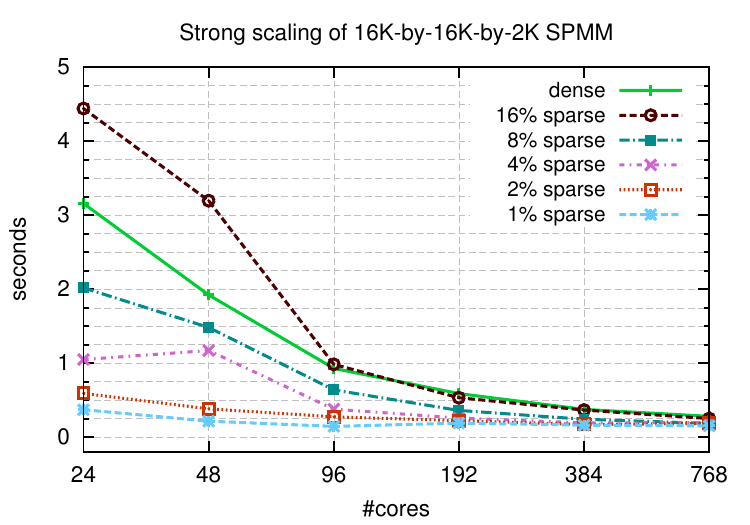}
\label{fig:spmm_ss_edison}
}
\subfigure[effect of sparsity on weak scaling performance]{
\includegraphics[width=3.1in]{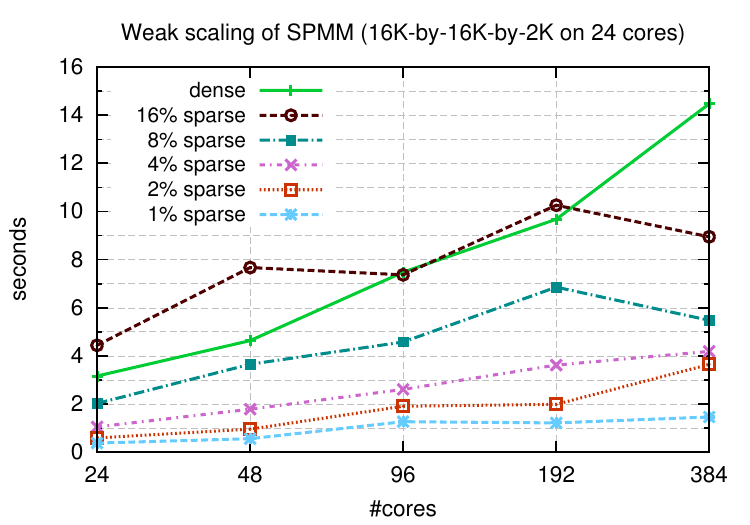}
\label{fig:spmm_ws_edison}
}
\caption{Performance of matrix multiplication of a square sparse matrix and a rectangular dense matrix on Edison (Cray XC30)}
\end{figure*}

We evaluate our algorithm and implementation for the contraction of sparse and dense tensors for three problems.
We start with a simple benchmark of multiplication of a sparse matrix by a somewhat smaller (in dimensions) dense matrix.
Then we move on to the MP3 method for electronic structure that contracts a sparse tensor with a dense tensor, a computationally similar problem to the first benchmark.
These two benchmarks operate on double precision floating point numbers and leverage the sparse column-sparse-row (CSR) matrix multiplication routines provided by the Intel MKL library (version 13.0.3).
Finally, we evaluate the performance of the framework when using only a reference sequential kernel, for APSP via path doubling.
These performance experiments test a limited range of the capabilities of the framework, but are representative of the main target applications.
In particular, we focus on sparse-matrix-matrix multiplication rather than sparse-matrix-vector multiplication, since in general tensor contractions reduce to the former, and additionally, the latter has already been studied extensively in existing literature.

Our experimental platform is Edison, a Cray XC30 architecture equipped
with two 12-core Intel ``Ivy Bridge" processors at 2.4GHz per node 
(19.2 Gflops per core and 460.8 Gflops per node)~\cite{edison}.
Edison has a Cray Aries high-speed interconnect with Dragonfly topology ($0.25 \mu s$ to $3.7 \mu s$ MPI latency, 8 GB/sec MPI bandwidth).
Each node has 64 GB of DDR3 1600 MHz memory (four 8 GB DIMMs per socket) and two shared 30 MB L3 caches (one per Ivy Bridge). 
Each core has its own L1 and L2 caches, of size 64 KB and 256 KB.

All of our benchmarks report the median time over ten iterations (excluding one warm-up iteration).
We did not observe significant variability in performance for all of the benchmarks.
Even when running at the strong scaling limit, performance outliers were within roughly 10\% of the median time.
This observation is expected, since the benchmarks have relatively large granularity and we did not use a significant fraction of the Edison system.


\subsection{Multiplication of a Dense Matrix by a Sparse Matrix}
Our first benchmark tests the scalability of the sparse functionality of CTF for one of the easiest use-cases.
We consider the matrix multiplication of a $n\times n$ sparse matrix containing a varied number of nonzeros, by a dense matrix with dimensions $n\times k$ (SPMM).
We pick $k$ to be a factor of $8$ smaller than $n$ in all experiments (varying $k$ over a larger range with respect to $n$ would be an interesting further experiment).
The nonzeros in the sparse matrix are selected based on an independent probability (thus the percentage of sparsity we report, is an expected and not an exact value).

Figure~\ref{fig:spmm_ss_edison} (the label (a) will always refer to the left figure and (b) to the right figure) demonstrates the strong scalability of the SPMM kernel with $n=16,384$ and $k=8,192$.
We first observe that while a benefit exists for sparse execution with respect to dense CTF kernels during single-node (24-core) execution, it is not proportional to the number of nonzeros. 
In fact, when the number of nonzeros is 16\%, $3.7$ seconds (85\% of the execution time) are spent on average in the local MKL CSR matrix multiplication (CSRMM) kernel, while the dense code spends only $2.5$ seconds.
However, when the sparsity percentage is decreased by a factor of two to 8\%, the CSRMM time falls by a factor of $2.57$.
Further decreases of the sparsity percentage by factors of two, yield to decreases of CSRMM time by less than a factor of two as expected, and generally diminish in their return.
This observation is consistent with the 3X improvement in performance of the 16\% sparse kernel when advancing from 48 to 96 cores.
The sequential overhead of switching to the sparse MKL routine is roughly a factor of ten, suggesting that a more efficient implementation may be possible.

The parallel strong scaling efficiency achieved by CTF for SPMM is better for higher counts of nonzeros, which is natural since the granularity of the parallelizable sequential work is higher.
When the number of cores reaches 768, sparsity is no longer as much of a benefit with respect to the dense code.
Even with 1\% sparsity, the speed-up over the dense code is only 1.8X rather than the 8.5X difference on 24 cores (in fact the lowest time to solution is achieved at 96 cores for the 1\% sparse case).

Figure~\ref{fig:spmm_ws_edison} shows the weak scaling of the SPMM kernel, which starts on 24 cores from the same problem as the strong scaling study, and increases both $n$ and $k$ by $\sqrt{2}$ whenever $p$ increases by two (keeping the memory usage per processor constant).
For weak scaling, sparsity is of substantial benefit on all core counts, with the speed-up obtained for the 1\% sparse problem increasing to 9.9X on 384 cores, and the time for the 16\% sparse problem becoming lower than the time needed for dense execution.
The contrast between the strong and weak scaling scenarios is highlighted by the time spent in the local contraction kernel (CSRMM), which for the strong scaled 1\% sparsity problem on 384 cores is only 13.5\%, while for the weak scaling 1\% sparse problem it is 44.5\% (on 384 cores the time in dense GEMM in 40.1\% for strong scaling 68.3\% for weak scaling for the dense execution).

\subsection{Sparse Electronic Structure Method (MP3)}

\begin{figure*}[t]
\centering
\subfigure[effect of sparsity on strong scaling performance]{
\includegraphics[width=3.1in]{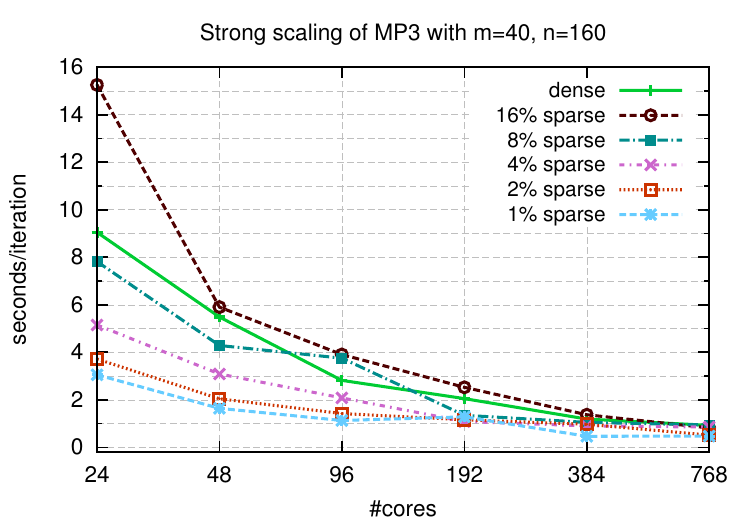}
\label{fig:spmp3_ss_edison}
}
\subfigure[effect of sparsity on weak scaling performance]{
\includegraphics[width=3.1in]{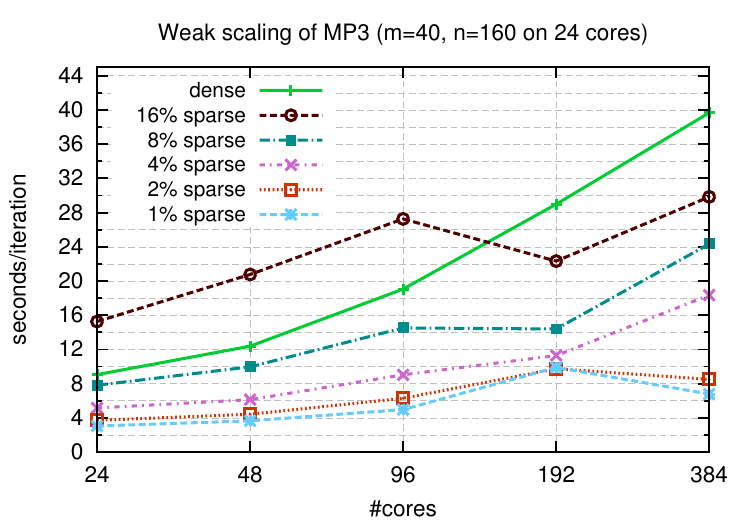}
\label{fig:spmp3_ws_edison}
}
\caption{Performance of MP3 for synthetic system of $m$ occupied orbitals and $n$ virtual orbitals with sparse integral tensors on Edison (Cray XC30)}
\end{figure*}

The SPMM kernel we analyzed is similar to the matrix multiplications that are isomorphic to the sixth order contractions presented in the MP3 calculation in Section~\ref{sec:apps:esc}.
However, computing the MP3 contractions via sparse matrix multiplication requires an extra step of transposing the tensors to the proper ordering.
Locally reordering the elements for the transposition step is relatively cheap with respect to the contraction, but in certain cases the transpositions also require data redistribution across processors, which is more expensive and less scalable.
Additionally, the MP3 method requires the application of a division, done as a custom user-defined function, and some other purely dense lower order contractions.

The strong and weak scaling results presented in Figures~\ref{fig:spmp3_ss_edison} and~\ref{fig:spmp3_ws_edison} nevertheless follow the same trends as the simpler SPMM kernel.
The baseline speed-up attained on 24 cores when the integral tensors are 1\% sparse is only 3.0X rather than the 8.5X obtained for SPMM, due to the additional operations.
However, the speed-up persists during strong scaling, with the 1\% sparse execution being 2.0X faster than the dense execution on 768 cores.
The parallel efficiency of the 1\% sparse code to 384 cores, where the lowest time to solution is achieved, is 41\%.
The weak scaling is done by increasing both $m$ and $n$ by $2^{1/4}$, every time $p$ is increased by two, keeping the size of all the integral tensors the same.
Since in this scaling regime the lower-order computations become less and less significant, the speed-up of the 1\% sparse version over the dense version increases to 5.8X on 384 nodes (from 3.0X on 24 cores). 
The percentage of time spent in MKL CSRMM on 384 cores is 10.5\% for strong scaling and 33.3\% for weak scaling (on 384 cores the time in dense GEMM in 34.8\% for strong scaling 63.3\% for weak scaling for the dense execution).

Overall, these results are positive for the prospect of accelerating electronic structure calculations via sparsity.
However, while our use of MKL CSRMM provides portability and leverages common infrastructure, the observed low sequential efficiency is a limitation of the current implementation.
We expect that more significant application speed-ups will be attainable from sparsity with the use of better-tuned sequential kernels.
A higher rate of performance may be achieved locally via instruction-level tuning or via the use of different sparse matrix formats.
When present, exploiting the presence of small dense blocks may be particularly beneficial, and has been used successfully for a different type of electronic structure method~\cite{Borštnik201447}.

\subsection{All-Pairs Shortest Paths}

\begin{figure*}[t]
\centering
\subfigure[effect of sparsity on strong scaling performance]{
\includegraphics[width=3.1in]{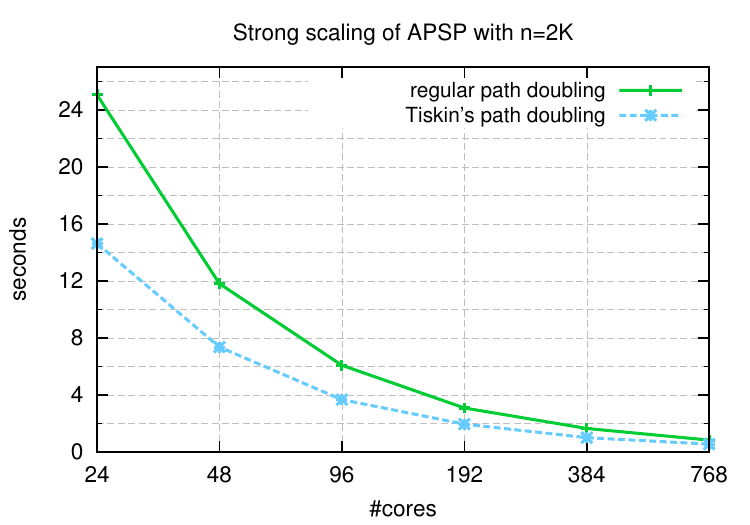}
\label{fig:apsp_ss_edison}
}
\subfigure[effect of sparsity on weak scaling performance]{
\includegraphics[width=3.1in]{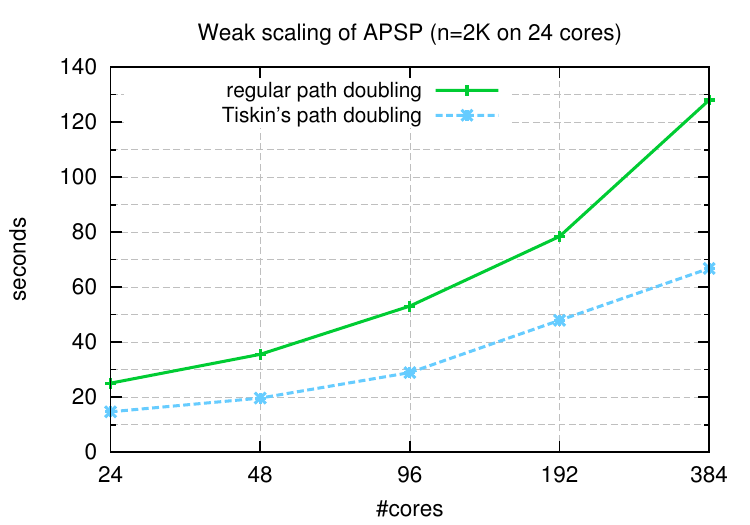}
\label{fig:apsp_ws_edison}
}
\caption{Performance of APSP on dense graph with $n$ nodes with random weights in range $[1,n^2]$ on Edison (Cray XC30)}
\end{figure*}

Lastly, we evaluate the performance of sparse and dense path doubling (introduced in Section~\ref{sec:apps:ga}).
Figure~\ref{fig:apsp_ss_edison} compares the strong scalability of the two algorithms for a dense graph of $n=2,048$ vertices, with edge weights selected as integers in the range $[1,n^2]$.
The overall performance of both kernels is quite low due to the use of a reference dense block integer multiplication kernel and a coordinate-format sparse integer multiplication kernel.
It is nevertheless interesting to observe that Tiskin's path doubling (for which we are the first to provide a parallel implementation) decreases the operation count and allows for noticeably better weak scalability than the dense kernel.
For weak scaling 87.8\% and 90.8\% percent of the time are spent in the local kernels on 384 cores for the dense and sparse algorithms, while for strong scaling we observe 86.6\% for the dense kernel and 71.4\% for the sparse kernel.

At this scale of parallelism, the lack of local optimized kernels is clearly still of overwhelming influence.
Our algebraic structure interface also enables the specification of user implementations of local sparse and dense matrix multiplication kernels.
Therefore, it is easy to plug-in a hand-tuned integer matrix multiplication kernel, however, for more complicated custom functions this may not always be available.
A more in-depth study of the problem that compares these methods with Floyd-Warshall for sparse graphs from application domains remains an interesting future work direction.

\section{Conclusion and Future Work}
\label{sec:conc}




In providing support for customizable sparse tensor operations in a high-performance distributed-memory context, we have extended the boundaries of the faculties of numerical libraries.
Under a minimalistic, yet powerful interface, we have hidden a flexible library that makes data-decomposition decisions at runtime, and employs state-of-the-art communication-avoiding algorithms.
The space of potential optimizations capable of accelerating this abstraction is vast, as evidenced by the extensive literature studying just the optimization of sparse-matrix-vector multiplication for floating point numbers, one of the simplest kernels generalized by our model.
Our performance results show the parallel scalability of our approach, despite only a limited amount of tuning having been done to date.

The multidimensionality of our work, which provides a programming model, an algorithmic study, a library implementation, and an evaluation on applications, enables numerous directions for future work.
The capabilities of the CTF framework can be further extended by providing support for contractions of two sparse tensor operands (SpGEMM~\cite{doi:10.1137/110848244}) and for output sparsity in contractions.
The performance and flexibility may also be improved by the study of optimizations and parallelism across multiple contractions, as well as the use of automatically-adjusted performance models.
The utility of our library is highlighted by the electronic structure application domain, where sparsity is an emerging solution to lowering the high-order complexity of methods for modeling electronic correlation.
In studying the performance of a simple method, MP3, we showed that sparsity can yield nearly a 6X speed-up and a benefit that increases with greater parallelism.
Further, we believe our approach will evolve to provide an elegant and efficient parallel programming model not only for electronic structure computations, but many applications in the greater domain of data-science.

\bibliographystyle{halpha}
\bibliography{paper}

\newcommand{\etalchar}[1]{$^{#1}$}
\begin{thebibliography}{PMvdG{\etalchar{+}}13}

\bibitem[ABB{\etalchar{+}}92]{LAPACK}
E.~Anderson, Z.~Bai, C.~Bischof, J.~Demmel, J.~Dongarra, J.~D. Croz,
  A.~Greenbaum, S.~Hammarling, A.~McKenney, S.~Ostrouchov, and D.~Sorensen.
\newblock {\em {LAPACK {U}sers' {G}uide}}.
\newblock SIAM, Philadelphia, PA, USA, 1992.

\bibitem[BAB{\etalchar{+}}05]{1386652}
G.~Baumgartner, A.~Auer, D.~Bernholdt, A.~Bibireata, V.~Choppella, D.~Cociorva,
  X.~Gao, R.~Harrison, S.~Hirata, S.~Krishnamoorthy, S.~Krishnan, C.~Lam,
  Q.~Lu, M.~Nooijen, R.~Pitzer, J.~Ramanujam, P.~Sadayappan, and A.~Sibiryakov.
\newblock Synthesis of high-performance parallel programs for a class of ab
  initio quantum chemistry models.
\newblock {\em Proceedings of the IEEE}, 93(2):276 --292, February 2005.

\bibitem[Bat94]{batagelj1994semirings}
V.~Batagelj.
\newblock Semirings for social networks analysis.
\newblock {\em Journal of Mathematical Sociology}, 19(1):53--68, 1994.

\bibitem[BBD{\etalchar{+}}13]{Ballard:2013:COP:2486159.2486196}
G.~Ballard, A.~Buluc, J.~Demmel, L.~Grigori, B.~Lipshitz, O.~Schwartz, and
  S.~Toledo.
\newblock Communication optimal parallel multiplication of sparse random
  matrices.
\newblock In {\em Proceedings of the Twenty-fifth Annual ACM Symposium on
  Parallelism in Algorithms and Architectures}, SPAA '13, pages 222--231, New
  York, NY, USA, 2013. ACM.

\bibitem[BCC{\etalchar{+}}97]{SCALAPACK}
L.~S. Blackford, J.~Choi, A.~Cleary, E.~{D'Azevedo}, J.~Demmel, I.~Dhillon,
  J.~Dongarra, S.~Hammarling, G.~Henry, A.~Petitet, K.~Stanley, D.~Walker, and
  R.~C. Whaley.
\newblock {\em {ScaLAPACK Users' Guide}}.
\newblock SIAM, Philadelphia, PA, USA, May 1997.

\bibitem[Bel58]{bellman1958routing}
R.~Bellman.
\newblock {On a Routing Problem}.
\newblock {\em Quarterly of Applied Mathematics}, 16:87--90, 1958.

\bibitem[BFF{\etalchar{+}}09]{Buluc:2009:PSM:1583991.1584053}
A.~Bulu\c{c}, J.~T. Fineman, M.~Frigo, J.~R. Gilbert, and C.~E. Leiserson.
\newblock Parallel sparse matrix-vector and matrix-transpose-vector
  multiplication using compressed sparse blocks.
\newblock In {\em Proceedings of the Twenty-first Annual Symposium on
  Parallelism in Algorithms and Architectures}, SPAA '09, pages 233--244, New
  York, NY, USA, 2009. ACM.

\bibitem[BG11]{bulucc2011combinatorial}
A.~Bulu{\c{c}} and J.~R. Gilbert.
\newblock The {C}ombinatorial {BLAS}: Design, implementation, and applications.
\newblock {\em International Journal of High Performance Computing
  Applications}, 25:496--509, 2011.

\bibitem[BG12]{doi:10.1137/110848244}
A.~Bulu{\c{c}} and J.~R. Gilbert.
\newblock Parallel sparse matrix-matrix multiplication and indexing:
  Implementation and experiments.
\newblock {\em SIAM Journal on Scientific Computing}, 34(4):C170--C191, 2012.

\bibitem[BKSE12]{2012arXiv1209.5145B}
J.~{Bezanson}, S.~{Karpinski}, V.~B. {Shah}, and A.~{Edelman}.
\newblock {Julia: A Fast Dynamic Language for Technical Computing}.
\newblock {\em ArXiv e-prints}, September 2012, 1209.5145.

\bibitem[Bra01]{brandes2001faster}
U.~Brandes.
\newblock A faster algorithm for betweenness centrality.
\newblock {\em Journal of Mathematical Sociology}, 25(2):163--177, 2001.

\bibitem[BS75]{BarlettMP3_1975}
R.~J. Bartlett and D.~M. Silver.
\newblock Many‐body perturbation theory applied to electron pair correlation
  energies. i. closed‐shell first‐row diatomic hydrides.
\newblock {\em The Journal of Chemical Physics}, 62(8):3258--3268, 1975.

\bibitem[BVWH14]{Borštnik201447}
U.~Borštnik, J.~VandeVondele, V.~Weber, and J.~Hutter.
\newblock Sparse matrix multiplication: The distributed block-compressed sparse
  row library.
\newblock {\em Parallel Computing}, 40(5–6):47 -- 58, 2014.

\bibitem[CLSH15]{ChowHF2015}
E.~Chow, X.~Liu, M.~Smelyanskiy, and J.~R. Hammond.
\newblock Parallel scalability of {H}artree--{F}ock calculations.
\newblock {\em The Journal of Chemical Physics}, 142(10):--, 2015.

\bibitem[CLV15]{2015arXiv150900309C}
J.~A. {Calvin}, C.~A. {Lewis}, and E.~F. {Valeev}.
\newblock {Scalable Task-Based Algorithm for Multiplication of
  Block-Rank-Sparse Matrices}.
\newblock {\em ArXiv e-prints}, September 2015, 1509.00309.

\bibitem[DEF{\etalchar{+}}13]{demmel2013communication}
J.~Demmel, D.~Eliahu, A.~Fox, S.~Kamil, B.~Lipshitz, O.~Schwartz, and
  O.~Spillinger.
\newblock Communication-optimal parallel recursive rectangular matrix
  multiplication.
\newblock In {\em IEEE International Symposium on Parallel Distributed
  Processing (IPDPS)}, 2013.

\bibitem[DG08]{dean2008mapreduce}
J.~Dean and S.~Ghemawat.
\newblock Mapreduce: simplified data processing on large clusters.
\newblock {\em Communications of the ACM}, 51(1):107--113, 2008.

\bibitem[edi]{edison}
{NERSC} description of {Edison} configuration.
\newblock
  \url{https://www.nersc.gov/users/computational-systems/edison/configuration}.
\newblock Accessed: 2015-09-12.

\bibitem[EWK{\etalchar{+}}13]{JCC:JCC23377}
E.~Epifanovsky, M.~Wormit, T.~Kuś, A.~Landau, D.~Zuev, K.~Khistyaev,
  P.~Manohar, I.~Kaliman, A.~Dreuw, and A.~I. Krylov.
\newblock New implementation of high-level correlated methods using a general
  block tensor library for high-performance electronic structure calculations.
\newblock {\em Journal of Computational Chemistry}, 34(26):2293--2309, 2013.

\bibitem[FF58]{ford1958network}
L.~Ford and D.~Fulkerson.
\newblock Network flow and systems of representatives.
\newblock {\em Canad. J. Math}, 10(1):78--84, 1958.

\bibitem[Flo62]{Floyd:1962}
R.~W. Floyd.
\newblock Algorithm 97: Shortest path.
\newblock {\em Commun. ACM}, 5:345--, June 1962.

\bibitem[GL05]{BGL2005}
D.~Gregor and A.~Lumsdaine.
\newblock The {P}arallel {BGL}: A generic library for distributed graph
  computations.
\newblock {\em Parallel Object-Oriented Scientific Computing (POOSC)}, 2:1--18,
  2005.

\bibitem[Hir03]{doi:10.1021/jp034596z}
S.~Hirata.
\newblock Tensor {C}ontraction {E}ngine: Abstraction and automated parallel
  implementation of configuration-interaction, coupled-cluster, and many-body
  perturbation theories.
\newblock {\em The Journal of Physical Chemistry A}, 107(46):9887--9897, 2003.

\bibitem[KG11]{doi:10.1137/1.9780898719918}
J.~Kepner and J.~Gilbert.
\newblock {\em Graph Algorithms in the Language of Linear Algebra}.
\newblock Society for Industrial and Applied Mathematics, 2011.

\bibitem[KM13]{Kats_sp_tensor2013}
D.~Kats and F.~R. Manby.
\newblock Sparse tensor framework for implementation of general local
  correlation methods.
\newblock {\em The Journal of Chemical Physics}, 138(14):--, 2013.

\bibitem[KRS89]{Kruskal1989135}
C.~P. Kruskal, L.~Rudolph, and M.~Snir.
\newblock Techniques for parallel manipulation of sparse matrices.
\newblock {\em Theoretical Computer Science}, 64(2):135 -- 157, 1989.

\bibitem[LHKK79]{lawson1979basic}
C.~L. Lawson, R.~J. Hanson, D.~R. Kincaid, and F.~T. Krogh.
\newblock Basic {L}inear {A}lgebra {S}ubprograms for {F}ortran usage.
\newblock {\em ACM Transactions on Mathematical Software (TOMS)},
  5(3):308--323, 1979.

\bibitem[LSR{\etalchar{+}}13]{Lai:2013:FLB:2503210.2503290}
P.-W. Lai, K.~Stock, S.~Rajbhandari, S.~Krishnamoorthy, and P.~Sadayappan.
\newblock A framework for load balancing of tensor contraction expressions via
  dynamic task partitioning.
\newblock In {\em Proceedings of SC13: International Conference for High
  Performance Computing, Networking, Storage and Analysis}, SC '13, pages
  13:1--13:10, New York, NY, USA, 2013. ACM.

\bibitem[MAB{\etalchar{+}}10]{malewicz2010pregel}
G.~Malewicz, M.~H. Austern, A.~J. Bik, J.~C. Dehnert, I.~Horn, N.~Leiser, and
  G.~Czajkowski.
\newblock Pregel: a system for large-scale graph processing.
\newblock In {\em Proceedings of the 2010 ACM SIGMOD International Conference
  on Management of Data}, pages 135--146. ACM, 2010.

\bibitem[MP34]{moller1934note}
C.~M{\o}ller and M.~S. Plesset.
\newblock Note on an approximation treatment for many-electron systems.
\newblock {\em Physical Review}, 46(7):618, 1934.

\bibitem[NHL96]{springerlink_ga1}
J.~Nieplocha, R.~J. Harrison, and R.~J. Littlefield.
\newblock Global {A}rrays: A nonuniform memory access programming model for
  high-performance computers.
\newblock {\em The Journal of Supercomputing}, 10:169--189, 1996.

\bibitem[PHG10]{doi:10.1080/00268971003662896}
J.~A. Parkhill and M.~Head-Gordon.
\newblock A sparse framework for the derivation and implementation of fermion
  algebra.
\newblock {\em Molecular Physics}, 108(3-4):513--522, 2010.

\bibitem[PMvdG{\etalchar{+}}13]{elemental1}
J.~Poulson, B.~Marker, R.~A. van~de Geijn, J.~R. Hammond, and N.~A. Romero.
\newblock Elemental: A new framework for distributed memory dense matrix
  computations.
\newblock {\em ACM Trans. Math. Softw.}, 39(2):13:1--13:24, February 2013.

\bibitem[Pri57]{6773228}
R.~Prim.
\newblock Shortest connection networks and some generalizations.
\newblock {\em Bell System Technical Journal, The}, 36(6):1389--1401, Nov 1957.

\bibitem[RNL{\etalchar{+}}13]{rajbhandari2013framework}
S.~Rajbhandari, A.~Nikam, P.-W. Lai, K.~Stock, S.~Krishnamoorthy, and
  P.~Sadayappan.
\newblock Framework for distributed contractions of tensors with symmetry.
\newblock {\em Preprint, Ohio State University}, 2013.

\bibitem[SD11]{SD_EUROPAR_2011}
E.~Solomonik and J.~Demmel.
\newblock Communication-optimal parallel {2.5D} matrix multiplication and {LU}
  factorization algorithms.
\newblock In {\em Euro-Par 2011 Parallel Processing}, volume 6853 of {\em
  Lecture Notes in Computer Science}, pages 90--109. Springer Berlin
  Heidelberg, 2011.

\bibitem[SDH15]{SDH_ETHZ_2015}
E.~Solomonik, J.~Demmel, and T.~Hoefler.
\newblock Communication lower bounds for tensor contraction algorithms.
\newblock Technical report, ETH Z\"urich, 2015.

\bibitem[SMH{\etalchar{+}}14]{solomonik2014massively}
E.~Solomonik, D.~Matthews, J.~R. Hammond, J.~F. Stanton, and J.~Demmel.
\newblock A massively parallel tensor contraction framework for coupled-cluster
  computations.
\newblock {\em Journal of Parallel and Distributed Computing},
  74(12):3176--3190, 2014.

\bibitem[SS95]{HF_screening_1995}
D.~L. Strout and G.~E. Scuseria.
\newblock A quantitative study of the scaling properties of the
  {H}artree--{F}ock method.
\newblock {\em The Journal of Chemical Physics}, 102(21):8448--8452, 1995.

\bibitem[SW01]{LCCSD_2001}
M.~Sch{\"u}tz and H.-J. Werner.
\newblock Low-order scaling local electron correlation methods. {IV}. linear
  scaling local coupled-cluster ({LCCSD}).
\newblock {\em The Journal of Chemical Physics}, 114(2):661--681, 2001.

\bibitem[Tis01]{tiskin_apsp}
A.~Tiskin.
\newblock All-pairs shortest paths computation in the {BSP} model.
\newblock {\em Lecture Notes in Computer Science, Automata, Languages and
  Programming}, 2076:178--189, 2001.

\bibitem[Val90]{valiant1990bridging}
L.~G. Valiant.
\newblock A bridging model for parallel computation.
\newblock {\em Communications of the ACM}, 33(8):103--111, 1990.

\bibitem[vdGW97]{Geijn:SUMMA:1997}
R.~A. van~de Geijn and J.~Watts.
\newblock {SUMMA}: {S}calable {U}niversal {M}atrix {M}ultiplication
  {A}lgorithm.
\newblock {\em Concurrency: Practice and Experience}, 9(4):255--274, 1997.

\bibitem[War62]{Warshall:1962}
S.~Warshall.
\newblock A theorem on boolean matrices.
\newblock {\em J. ACM}, 9:11--12, January 1962.

\end{thebibliography}

\end{document}